\newtheorem{mythm}{Theorem}
\begin{document}

\title{Channel Estimation in Massive MIMO Systems}

\author{David Neumann, Michael Joham, and Wolfgang Utschick\\
	  Department for Electrical and Computer Engineering \\
	    Technische Universit\"at M\"unchen \\
	      80290 Munich, Germany\\\small
	        \{\texttt{d.neumann,joham,utschick}\}\texttt{@tum.de}%
\thanks{This work extends our previous considerations on semi-blind channel estimation in~\cite{neumann_suppression_2014}.}%
}

\markboth{}%
{}

\maketitle

\begin{abstract}
	We introduce novel blind and semi-blind channel estimation methods
	for cellular time-division duplexing systems with a large number of
	antennas at each base station.  The methods are based on the
	maximum a-posteriori principle given a prior for the distribution
	of the channel vectors and the received signals from the uplink
	training and data phases.  Contrary to the state-of-the-art massive
	MIMO channel estimators which either perform linear estimation
	based on the pilot symbols or rely on a blind principle, the
	proposed semi-blind method efficiently suppresses most of the
	interference caused by pilot-contamination.  The simulative
	analysis illustrates that the semi-blind estimator outperforms
	state-of-the-art linear and non-linear approaches to the massive
	MIMO channel estimation problem.
\end{abstract}
%

\IEEEpeerreviewmaketitle

\section{Introduction}

Deployment of large numbers of antennas at the base stations in a cellular
network can lead to huge gains in both, spectral and energy efficiency. The
increase in energy efficiency is mostly due to the array gain and the high
spectral efficiency is achieved by serving several users simultaneously
through spatial multiplexing.

Under favorable propagation conditions~\cite{rusek_scaling_2013}, namely,
i.i.d. channel coefficients for the links from the different antennas at a
base station to a user terminal, the spatial channels to distinct users are
close to orthogonal in such a massive MIMO system due to the law of large
numbers. Consequently, the interference caused by spatial multiplexing can
be suppressed with simple signal processing methods, e.g., matched
filters~\cite{marzetta_noncooperative_2010,hoydis_massive_2013,rusek_scaling_2013}
or even constant envelope precoding~\cite{mohammed_per-antenna_2013}.

Even these simple beamforming techniques require accurate channel state
information (CSI) at the base station to enable the interference
suppression capability.  A closed-loop setup is necessary for channel
estimation in frequency division duplex (FDD) systems, where the downlink
channels must be estimated in a downlink training phase and fed back to the
base station to allow downlink beamforming.  Note that the worst-case
number of orthogonal pilots necessary to enable the estimation of the
vector channels by the single-antenna receivers is equal to the number of
base station antennas. Given a block fading model, where the channels are
constant in a coherence interval with fixed number of channel accesses, it
is thus challenging to implement an accurate closed-loop channel estimation
for a massive MIMO system, due to the large number of base station
antennas~\cite{marzetta_noncooperative_2010,rusek_scaling_2013}.
Additionally, the fed back CSI is deteriorated not only by the estimation
errors, but might also suffer from the quantization error due to the limited
rate feedback and outdating due to the delay between estimation and
application at the base stations.

Instead, we focus on time-division duplexing (TDD) massive MIMO systems and
exploit the channel reciprocity to acquire CSI in a timely manner. That is,
in each coherence interval, we have an uplink training phase where each
user transmits a pilot sequence. Thus, in a TDD system, the necessary
number of orthogonal training sequences is equal to the number of users. In
other words, the number of orthogonal pilot sequences is dramatically
smaller for TDD than for FDD massive MIMO systems, since the number of
users is much smaller than the number of antennas at the base station. 

For the downlink phase in TDD systems, the base station generates beamforming
vectors based on the channel estimates obtained during the previous uplink phase.
Note that such a usage of the uplink CSI for downlink beamforming is
impossible for FDD systems contrary to TDD systems. However, a
sophisticated calibration of the uplink and downlink signal chains is
necessary to enable the exploitation of the channel reciprocity in TDD
systems. In the following, we will assume perfect calibration implying that
reciprocity between uplink and downlink channels holds.

As the total number of orthogonal pilot sequences is limited by the channel
coherence interval and the requirement to actually transmit payload data, the pilot sequences must be reused in neighboring
cells. Therefore, the channel estimates contain interference from the
channels to users in neighboring cells with the same pilot sequences since
only the pilot sequences within one cell are guaranteed to be orthogonal.
The interference between the channels of the users during the uplink
training leads to interference in both the uplink and downlink data transmission phases in a reciprocity
based system. This effect is called pilot-contamination, which, due to the
resulting interference during the data transmission, poses a limit
on the achievable rate in a massive MIMO system that relies on linear
signal processing~\cite{marzetta_noncooperative_2010,
hoydis_massive_2013,gopalakrishnan_analysis_2011}. 

Several approaches were proposed to tackle the pilot-contamination problem.
With coordination between base stations, the impact of pilot-contamination
can be reduced, e.g., by appropriately scheduling the downlink and training
phases for neighboring cells~\cite{fernandes_inter-cell_2013} or by
coordinating the allocation of pilot-sequences in neighbouring
cells~\cite{neumann_pilot_2014,yin_coordinated_2013,huh_achieving_2012}.
Optimization of the trade-off between channel accesses employed for pilot
transmission and for data transmission further improves the total
throughput and fairness of the
system~\cite{hoydis_optimal_2011,neumann_amount_2014}.

Cooperative transmission from several base stations to a single user, so
called coordinated multi-point (CoMP) or network MIMO, was proposed for
multi-cell MIMO systems to enable inter-cell interference
management~\cite{gesbert_multi-cell_2010}. To this end, instantaneous CSI
is necessary at the central hub. In massive MIMO systems, however,
cooperative transmission strategies, so called pilot contamination
precoding (PCP), can be designed based on channel statistic
\cite{ashikhmin_pilot_2012,neumann_cdi_2015,lu_overview_2014,li_pilot_2013}.
In the asymptotic limit, when the number of base station antennas tends to
infinity, it is possible to apply zero-forcing techniques to remove all
interference using only statistical information about the
channel~\cite{ashikhmin_pilot_2012,neumann_cdi_2015}.

Independently of improved resource allocation or PCP, the interference and
noise in the channel estimates can alternatively be reduced by more
sophisticated channel estimation methods. The structure in the channel
covariance matrices can be exploited by linear minimum mean squared error
(MMSE) estimation~\cite{yin_coordinated_2013} or by heuristics based on
projecting onto principal and minor subspaces of the channel covariance
matrices~\cite{yin_dealing_2014,liu_hierarchical_2014}. For sparse
channels, other approaches involving results from compressed sensing theory
have been proposed~\cite{masood_efficient_2015,rao_distributed_2014}.
Approaches that exploit the structure of the channel might reduce the
necessary amount of pilot sequences in FDD massive MIMO
systems~\cite{rao_distributed_2014,choi_downlink_2014,nam_joint_2014}. As
pilot-contamination is also a sever problem in the uplink, similar
techniques have been proposed for equalization~\cite{adhikary_uplink_2014}.

Previous work on the suppression of pilot-contamination during the channel
estimation by the application of non-linear signal processing has been
reported in~\cite{mueller_blind_2013}. The blind channel estimation
approach proposed in~\cite{mueller_blind_2013} is based on the idealistic
assumption, that interfering users from neighboring cells always have
weaker channels than the desired users. Under this assumption, it is
possible to separate the desired uplink signal space and the interference
space in the asymptotic limit of an infinite number of users and antennas,
for a fixed ratio of number of users and antennas, just by applying a
principal component analysis to the received uplink data.  Another related
blind channel estimation method was proposed in~\cite{ngo_evd-based_2012},
where the principal component analysis of the uplink data together with
prior information on the quality of the channels to the different users is
used to estimate the channel vectors up to a scalar ambiguity.
Alternatively or additionally to the principal component analysis of the
uplink data, also information from the decoder can be used to improve the
channel estimation~\cite{ma_data-aided_2014,ngo_evd-based_2012}. 

A semi-blind channel estimation approach for the interference-free case was
proposed in~\cite{jagannatham_whitening-rotation-based_2006}. The
approach is based on the assumption that the blind estimate of the channel is
unique up to a unitary rotation. The unitary ambiguity is then estimated
with the help of pilot signals. In~\cite{neumann_suppression_2014}, we
proposed a heuristic semi-blind method based on the maximum-a-posteriori
(MAP) problem formulation, which, however, is not sufficiently robust in
the case of inaccurate subspace information from the principal component
analysis of the uplink data.

We take a methodical approach to the uplink channel estimation problem in 
massive MIMO systems with the following contributions.
\begin{itemize}
	\item We review results on training-based estimation in
		Section~\ref{sec:training-based} and give the analytical
		solution for the maximum-a-posteriori (MAP) estimate based
		on the received training data, which, for the considered
		Gaussian system model, is equivalent to the training-based
		minimum mean squared error (MMSE) estimate.

	\item In Section~\ref{sec:blind}, we derive an analytical solution of the
		MAP estimator for the blind case, i.e., the channel estimation only
		relies on the received data but no training data is used. The
		resulting estimate is based on the singular value decomposition of
		the received uplink data and is similar in structure to the one
		proposed in~\cite{ngo_evd-based_2012}.

	\item In Section~\ref{sec:semi-blind}, we formulate the MAP estimator for
		the combined data and training based estimation, i.e., semi-blind
		channel estimation. However, the resulting optimization problem
		turns out to be non-convex. Therefore, a gradient ascent method is
		briefly described to obtain a local optimum.

	\item We discuss heuristic suboptimal methods for semi-blind estimation
		which can be used as a starting point for a gradient based search
		on the original MAP problem.

	\item The numerical results, presented in Section~\ref{sec:results} to compare the different
		estimation approaches, demonstrate the performance gains
		using the proposed semi-blind estimation methods.
\end{itemize}

\subsection{Notation}

Throughout this paper, $(\cdot)\he$ denotes the conjugate transpose of a
matrix and $(\cdot)^+$ denotes the pseudo inverse. The notation $f(x) \propto
g(x)$ is short for $g(x) = c f(x)$ where $c$ is a constant which is
independent of $x$. With $\zeros$ and $\ones$ we denote vectors of
all-zeros and all-ones and $\id$ denotes the identity matrix. The operation
$[\cdot]_+$ is short for $\max(\cdot,0)$.

\section{System Model}

We consider a cellular network with $L$ cells. Without loss of generality,
we focus on the channel estimation at the base station in cell $1$. Let
\begin{equation}
	\vect H_i=[\vect h_{i1},\ldots,\vect h_{iK}] \in \mathbb C^{M\times K}
\end{equation}
denote the matrix of channel vectors corresponding to the users $k=1,\ldots,K$ in cell
$i$ to base station $1$, where $M$ is the number of antennas at the base
station in cell $1$ and $K$ is the number of users per cell, which, for the
sake of notational brevity, we assume is the same for all cells.

The channel vector from user $k$ in cell $i$ to base station $1$ is modeled as~\cite{marzetta_noncooperative_2010}
\begin{equation}\label{eq:chanmodsingle}
	\vect h_{ik} = \vect a_{ik} \sqrt{\beta_{ik}}
\end{equation}
where the entries of $\vect a_{ik}$ are i.i.d. complex Gaussian with zero
mean and unit variance, and the scaling factor $\beta_{ik}$ describes the
quasi-static shadow fading and the path-loss. Channel vectors to different
users are assumed to be independent. Consequently, we have
\begin{equation}\label{eq:chanmod}
	\H_i = \A_i \B_i^{1/2}
\end{equation}
with $\A_i = [\a_{i1},\ldots,\a_{iK}]$ and $\B_i = \diag(\beta_{i1},\ldots,\beta_{iK})$.

We consider a block fading model, i.e., the channel is constant in a
certain coherence interval~$T$ and is independent of the channel in the
next coherence interval. Of this coherence interval, $T_\text{ul}$ time
slots are used for uplink data transmission, $T_\text{tr} \geq K$ time
slots for the transmission of uplink pilot sequences, and the remaining time
slots are used for downlink data transmission. We assume that all terminals
use the same transmit power. The received uplink data is given
by
\begin{equation}
	\Yul = \sqrt{\rho_\text{ul}} \sum_{i=1}^L \H_i \vect
	X_i\he + \vect N_\text{ul} \in \mathbb C^{M \times T_\text{ul}}
\end{equation}
where $\rho_\text{ul}$ denotes the uplink signal-to-noise-ratio (SNR),
i.e., the uplink transmit power divided by the noise power at one
receive antenna, $\vect X_i \in \mathbb C^{ T_\text{ul} \times K}$ is the
uplink data transmitted by the users in cell $i$, and $\vect N_\text{ul}$ is the
additive noise. The entries in $\vect N_\text{ul}$ and $\vect X_i$ are
assumed to be i.i.d. complex Gaussian with zero mean and unit variance. 

For notational convenience, we collect all variables of the interfering
cells in $\tilde\H = [\H_2,\ldots,\H_L]$ and $\tilde\X =
[\X_2,\ldots,\X_L]$ leading to
\begin{equation}\label{eq:sys_uplink}
	\Yul = \sqrt{\rho_\text{ul}} \H_1 \vect X_1\he + \sqrt{\rho_\text{ul}} \tilde\H \tilde\X\he + \vect N_\text{ul} \in \mathbb C^{M \times T_\text{ul}}.
\end{equation}

Similarly, the received training data can be written as 
\begin{equation}\label{eq:sys_training}
	\Ytr = \sqrt{\ptr} \H_1 \vect \vPsi_1\he + \sqrt{\ptr} \tilde\H \tilde\vPsi\he + \vect N_\text{tr} \in \mathbb C^{M \times T_\text{tr}}
\end{equation} 
where the columns of $\vPsi_1 \in \mathbb C^{\Ttr \times K}$ are the, not
necessarily orthogonal, unit-norm training sequences used by the users in
cell $1$ and the columns of $\tilde\vPsi = [\vPsi_2,\ldots,\vPsi_L]\in
\mathbb C^{\Ttr \times (L-1) K}$ are the training sequences used in the
interfering cells. It is necessary to set $\ptr=\pul \Ttr$ for a
resulting average uplink SNR of $\pul$ since the training sequences are
unit-norm.

\section{Training Based Estimation}
\label{sec:training-based}

In this section, we review some well established results on training based
estimation that relate to the novel approaches presented in the following
sections. We will also use these training based estimates as a baseline in
the evaluation of the different estimation approaches in
Section~\ref{sec:results}.

One straightforward way to find a channel estimate based on the training
data without employing any prior information is the least squares (LS)
estimate
\begin{equation}\label{eq:ls}
	\H^\text{LS}_1 = \argmin_{\H_1} \left\lVert \frac{1}{\sqrt{\ptr}}\Ytr - \H_1 \vPsi_1\he \right\rVert\fro^2
\end{equation}
where $\left\lVert \cdot \right\rVert\fro$ denotes the Frobenius norm. Note
that the optimizer of~\eqref{eq:ls} is the maximum likelihood (ML) estimate
for the interference-free case due to the assumption that the noise $\vN_\text{tr}$
is complex Gaussian with i.i.d. zero-mean, unit-variance entries. The
solution to~\eqref{eq:ls} is simply given by
\begin{equation}\label{eq:ls_sol}
	\H^\text{LS}_1 = \frac{1}{\sqrt{\ptr}}\Ytr (\vPsi_1\he)^+.
\end{equation}

Consequently, for orthonormal pilot sequences, where
$(\vPsi_1\he)^+=\vPsi_1$, we simply correlate the received signals at all
antennas with each of the pilot sequences. Note that the LS estimate does not give
satisfactory results if the interference from the other cells during the
pilot phase cannot be neglected.

We also formulate the maximum likelihood (ML) problem for the joint
estimation of both, the desired and the interfering channels, as
\begin{align}
	(\H^\text{ML}_1, \tilde \H^\text{ML}) &= \argmax_{(\H_1,\tilde \H)} f_{\Ytr \vert \H_1,\tilde \H}(\Ytr \vert \H_1,\tilde \H) \notag \\
	&= \argmax_{(\H_1,\tilde \H)} l_\text{tr}(\H_1,\tilde \H) 
\label{eq:ml}
\end{align}
where
\begin{equation}
	f_{\Ytr \vert \H_1, \tilde \H}( \Ytr \vert \H_1,\tilde \H) \propto 
	\exp\left[ -\bigg\lVert \Ytr - \sqrt{\ptr}\sum_{j=1}^L \H_j\vPsi_j\he \bigg\rVert\fro \right]
\end{equation}
for the assumption of white Gaussian noise.

Therefore, the log-likelihood function reads as
\begin{equation}\label{eq:lh_tr}
	l_\text{tr}(\H_1,\tilde \H) = -\left\lVert \Ytr - \sqrt{\ptr}\H_1 \vPsi_1\he - \sqrt{\ptr}\tilde \H \tilde \vPsi\he \right\rVert\fro^2.
\end{equation}
Setting the derivative w.r.t. $\H_1^*$ and $\tilde \H^*$ to zero, we get the normal equations
\begin{equation}\label{eq:normeq}
	\bmat \H_1, & \tilde \H \emat \bmat \vPsi_1\he \\ \tilde \vPsi\he \emat \bmat \vPsi_1, & \tilde \vPsi \emat 
	= \frac{\Ytr}{\sqrt{\ptr}}\bmat \vPsi_1, & \tilde \vPsi \emat.
\end{equation}
Since $[\vPsi_1,\tilde \vPsi] \in \mathbb C^{\Ttr \times LK}$, the normal
equations~\eqref{eq:normeq} do not have a unique solution for $\H_1$ and
$\tilde \H$ if the number of pilot symbols $T_\text{tr}$ is smaller than
the total number of users $LK$ in the network. Note that if we set $\tilde \H =
\zeros$ in~\eqref{eq:ml}, i.e., if we neglect the channels to the interfering
users, we end up with the least squares problem~\eqref{eq:ls}.

If the priors are known, we can formulate the training based maximum
a-posteriori (MAP) estimator
\begin{align}
	(\H_1^\text{TR},\tilde \H^\text{TR}) &= \argmax_{(\H_1, \tilde \H)} f_{\Ytr \vert \H_1,\tilde \H}(\Ytr \vert \H_1,\tilde \H) f_{\H_1}(\H_1) f_{\tilde \H}(\tilde \H)\notag\\
		&= \argmax_{(\H_1,\tilde \H)} l_\text{tr}(\H_1,\tilde \H) + l_\text{pr}(\H_1,\tilde \H)
\end{align}
with
\begin{equation}\label{eq:lh_pr}
	l_\text{pr}(\H_1,\tilde \H)= -\tr(\H_1\B_1\inv\H_1\he) - \tr(\tilde \H \tilde \B\inv \tilde \H \he)
\end{equation}
due to~\eqref{eq:chanmod} where $\tilde \B = \blkdiag(\B_2,\ldots,\B_L)$,
and $l_\text{tr}(\vH_1, \tilde \vH)$ can be found in~\eqref{eq:lh_tr}.

The linear system of equations for the optimizer reads as
\begin{equation}\label{eq:trmapsys}
	\bmat \H_1, & \tilde \H \emat 
	\bmat \vPsi_1\he\vPsi_1 + \vB_1\inv & \vPsi_1\he \tilde\vPsi \\ \tilde \vPsi\he \vPsi_1 & \tilde\vPsi\he \tilde\vPsi + \tilde \B\inv \emat 
	= \frac{\Ytr}{\sqrt{\ptr}}\bmat \vPsi_1, & \tilde \vPsi \emat
\end{equation}
which is obtained by setting the derivatives w.r.t. $\vH_1^*$ and $\tilde
\vH^*$ to zero and has a unique solution due to the regularization by
$\B_1\inv$ and $\tilde \B \inv$. We explicitly solve~\eqref{eq:trmapsys} for $\H_1$ leading to
\begin{align*}
	\H_1(\vPsi_1\he(\id - \tilde \vPsi(\tilde\vPsi\he\tilde\vPsi + \tilde\B\inv)\inv\tilde\vPsi\he)\vPsi_1 + \B_1\inv)
	= \frac{\Ytr}{\sqrt{\ptr}}(\id- \tilde\vPsi(\tilde\vPsi\he\tilde\vPsi + \tilde\B\inv)\inv \tilde\vPsi\he)\vPsi_1.
\end{align*}
Application of the matrix inversion lemma gives
\begin{align*}
	\H_1(\vPsi_1\he(\id + \tilde\vPsi\tilde\B\tilde\vPsi\he)\inv\vPsi_1 + \B_1\inv)
	= \frac{\Ytr}{\sqrt{\ptr}} (\id + \tilde\vPsi\tilde\B\tilde\vPsi\he)\inv \vPsi_1.
\end{align*}
Substituting the definitions of $\tilde \vPsi$ and $\tilde \B$ yields
\begin{align}\label{eq:sys_reform}
	\H_1^\text{TR} = \frac{\Ytr}{\sqrt{\ptr}} \Big( \id + \sum_{i=2}^L\vPsi_i\B_i\vPsi_i\he \Big)\inv \vPsi_1 \times
	\Big( \vPsi_1\he(\id + \sum_{i=2}^L\vPsi_i\B_i\vPsi_i\he)\inv\vPsi_1 + \B_1\inv \Big)\inv.
\end{align}
For our model, the MAP estimate is also the minimum mean squared error (MMSE) estimate,
because $f_{\Ytr,\H_1,\tilde\H}(\Ytr,\H_1,\tilde \H)$ is jointly Gaussian in $\H_1,\tilde\H$, and
$\Ytr$.

If we use the same orthonormal training sequences $\vPsi_i = \vPsi_1$ in
all cells $i$,~\eqref{eq:sys_reform} simplifies to 
\begin{equation}
	\H_1^\text{TR} = \frac{1}{\sqrt{\ptr}}\Ytr \vPsi_1 \B_1 \bigg( \id + \sum_{i=1}^L \B_i \bigg)\inv.
\end{equation}
In other words, the MMSE estimate is simply a scaled LS estimate~[cf.~\eqref{eq:ls_sol}] in this case and
consequently, the performance for linear matched filter or zero-forcing
precoding based on these estimation methods is equivalent. Note that this statement is
not true for the more general case of correlated channel coefficients, i.e.,
{$\h_{ik} \sim \CN(\zeros,\vR_{ik})$} where $\vR_{ik}$ is \emph{not} a
scaled identity matrix. In this case, the MMSE estimation can significantly
outperform the LS estimation, depending on the structure of the channel
covariance matrices.

In practical systems, the number of pilot symbols is significantly smaller
than the total number of users in the network. Thus, the training-based
estimates are subject to pilot contamination because of the necessary reuse
of pilot sequences. With $T_\text{tr} = K$ and the same set of orthonormal
pilot sequences reused in each cell, the least squares estimate is given by
\begin{equation}
	\H^\text{LS}_1 = \H_1 + \sum_{j=2}^L \H_j + \frac{1}{\ptr}\vN_\text{tr}\vPsi_1.
\end{equation}
The effect of the interference during the uplink channel estimation is
clearly visible. Furthermore, note that $\vN_\text{tr}\vPsi_1 \sim
\vN_\text{tr}$ due to the assumption of orthonormal pilots.

\section{Blind Estimation}
\label{sec:blind}

Previous work on blind channel and subspace estimation for massive MIMO
networks is based on the asymptotic orthogonality of both, channel vectors
and data symbol sequences for a large number of antennas $M$ and a large
number of received uplink data signals
$\Tul$~\cite{mueller_blind_2013,ngo_evd-based_2012}.

We refine the algorithm of~\cite{ngo_evd-based_2012} for the multi-cell
case, under the assumption that the slow fading coefficients
$\beta_{ik}$ to all users can be learned at the base stations and are therefore known~[see~e.g.,\cite{ashikhmin_pilot_2012}].

For blind channel estimation, the notational differentiation between desired and
interfering channels is not necessary. Hence, we collect all channels in a
single matrix 
\begin{align}\label{eq:channel}
	\H = [\H_1, \ldots, \H_L] = \vA\vB
\end{align}
with the corresponding
matrix of slow fading coefficients $\vB = \diag(\vB_1,\ldots,\vB_L)$ and
the matrix of fast fading coefficients $\vA = [\vA_1,\ldots,\vA_L]$ with i.i.d. complex Gaussian, unit-variance,
zero-mean entries. In this section, we will derive the analytical result for
the MAP estimator in the blind case, that is,
\begin{equation}\label{eq:MAPblindorig}
	\H^\text{BL} = \argmax_{\H} f_{\Yul \vert \H}(\Yul \vert \H) f_{\H}(\H).
\end{equation}
Note that the formulation in~\eqref{eq:MAPblindorig} is only based on the
received data $\vY$~[see~\eqref{eq:sys_uplink}] but is independent of the
received pilot signal $\vPhi$~[see~\eqref{eq:sys_training}]. Thus, a
quasi-closed-form solution can be obtained.

\begin{mythm} 
For $M > KL$, the blind MAP channel estimate is unique up to an unknown
complex phase for each channel vector, and the SVD of one possible
optimizer 
\begin{align}
	\H^\text{BL} = \vW \vXi \vT
\end{align}
can be calculated from the SVD of the uplink data $\Yul = \vU \vSigm \vV\he$ as follows.

The matrix of left singular vectors $\vW=\vU_{1:KL}$ is a matrix with
columns $1$ to~$KL$ of $\vU$. 

The matrix of right singular vectors $\vT = \vPi \in \{0,1\}^{KL \times KL}$
is a permutation matrix, such that the entries along the diagonal of
$\vPi\tp \vB\inv \vPi$ are sorted ascendingly. 

The singular values in the diagonal~$\vXi$ are given by
\begin{equation}
	\xi_n = \sqrt{\left[-\frac{1}{\pul} - \frac{\beta_{\pi\inv(n)}\Tul}{2} + \sqrt{ \frac{\beta_{\pi\inv(n)}^2\Tul^2}{4} + \frac{\beta_{\pi\inv(n)}\sigma_n^2 }{\pul}} \right]_+}.
\end{equation}

\end{mythm}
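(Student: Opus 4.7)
The plan is to reduce the MAP objective to a function of the SVD factors of $\H$ and then minimize factor by factor using majorization and rearrangement arguments.

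First I would marginalize the uplink data model over the random transmit symbols $\X$. Since the columns of $\X$ and of $\vect N_\text{ul}$ are i.i.d.\ standard complex Gaussian, the columns of $\Yul$ given $\H$ are i.i.d.\ $\CN(\zeros,\vR)$ with $\vR = \pul\H\H\he + \id$, so
\begin{equation*}
f_{\Yul\vert\H}(\Yul\vert\H) \propto \det(\vR)^{-\Tul}\exp(-\tr(\vR\inv\Yul\Yul\he)).
\end{equation*}
Combined with the Gaussian prior $f_\H(\H)\propto\exp(-\tr(\H\vB\inv\H\he))$ implied by~\eqref{eq:channel}, the MAP problem~\eqref{eq:MAPblindorig} becomes the minimization of
\begin{equation*}
F(\H) = \Tul\log\det\vR + \tr(\vR\inv\Yul\Yul\he) + \tr(\H\vB\inv\H\he).
\end{equation*}

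Next, I would substitute the SVDs $\H=\vW\vXi\vT$ and $\Yul=\vU\vSigm\vV\he$. Using that $\vR$ has eigenvalues $\pul\xi_n^2+1$ on the column space of $\vW$ and $1$ on its orthogonal complement, $F$ decomposes as
\begin{equation*}
F = \tr(\vSigm^2) + \sum_n\Big[\Tul\log(\pul\xi_n^2+1) + \big(\tfrac{1}{\pul\xi_n^2+1}-1\big)(\vW\he\vU\vSigm^2\vU\he\vW)_{nn} + \xi_n^2(\vT\vB\inv\vT\he)_{nn}\Big].
\end{equation*}
For any $\vXi$ with positive diagonal the coefficients of $(\vW\he\vU\vSigm^2\vU\he\vW)_{nn}$ are strictly negative, so by the Poincar\'e/Ky Fan majorization inequality the optimal orthonormal frame is $\vW=\vU_{1:KL}$, for which these quadratic forms reduce to $\sigma_n^2$. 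The remaining term $\sum_n\xi_n^2(\vT\vB\inv\vT\he)_{nn}$ is a linear functional of the diagonal of $\vT\vB\inv\vT\he$, which by the Schur--Horn theorem lies in the convex hull of the permutations of $\{1/\beta_k\}$; linear minimization over this polytope is attained at a vertex, so $\vT$ may be taken as a permutation matrix $\vPi$.

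It remains to jointly optimize $\vPi$ and $\vXi$. Setting $\partial F/\partial\xi_n^2=0$ yields a quadratic in $\pul\xi_n^2+1$ whose positive root, truncated at zero, is exactly the water-filling formula stated in the theorem. Substituting back gives a separable objective $\sum_n g(\sigma_n^2,1/\beta_{\pi(n)})$. Implicit differentiation of the first-order condition for $\tau=\xi_n^2$, together with the envelope theorem, gives $\partial\tau^\star/\partial\sigma^2>0$, hence $\partial^2 g/(\partial\sigma^2\,\partial(1/\beta))\ge 0$ and $g$ is supermodular. The Hardy--Littlewood rearrangement inequality then forces the anti-sorted pairing---descending $\sigma_n^2$ matched with ascending $1/\beta_{\pi(n)}$---which is precisely the sorting condition on $\vPi\tp\vB\inv\vPi$ stated in the theorem. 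Uniqueness up to a per-column phase of $\H$ follows because $\H$ enters the likelihood only through $\H\H\he$ and the prior is invariant under right multiplication of $\H$ by any diagonal unitary. The standing assumption $M>KL$ ensures that the noise subspace of $\vR$ is nontrivial, which is what makes the Ky Fan step separate the signal subspace cleanly.

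The main obstacle I expect is the joint treatment of the permutation and the singular values, because the water-filling optimum for each $\xi_n$ itself depends on which $\beta_k$ it is paired with. Decoupling them through the envelope-theorem reduction to $g(\sigma^2,d)$ and verifying supermodularity via the sign of $\partial\tau^\star/\partial\sigma^2$ is the linchpin; the Ky Fan step for $\vW$, the Schur--Horn step for $\vT$, and the water-filling quadratic for $\xi_n$ are standard once $F$ is placed in SVD coordinates.
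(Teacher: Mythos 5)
Your proof is correct and reaches the paper's result, but several of the key steps are carried out with different machinery than the paper uses. The paper works in the same SVD coordinates and also observes that $l_\text{ul}$ is independent of $\vT$ while $l_\text{pr}$ is independent of $\vW$, but it obtains the optimal $\vW$ by writing the Lagrangian of $\max_{\vW\he\vW=\id}\tr(\vW\he\vC\vW\vD)$, deriving the commutation relation $\vW\he\vC\vW\vD=\vD\vW\he\vC\vW$ from stationarity, and concluding that $\vW$ must consist of eigenvectors of $\vC$ matched in sorted order with the diagonal of $\vD$; you instead invoke the Ky Fan/von Neumann trace inequality directly, which gives the same conclusion without the detour through first-order conditions. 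For $\vT$ the paper argues by an analogous trace-rearrangement, whereas you go through Schur--Horn and vertex optimality of a linear functional on the permutohedron --- again equivalent, slightly heavier but airtight. The one place where your argument is genuinely stronger is the coupling between $\vPi$ and $\vXi$: the paper fixes the descending SVD ordering of $\vXi$, derives the sorted pairing for $\vW$ and $\vT$ under that convention, and then re-optimizes $\vXi$, implicitly assuming the resulting assignment of $(\sigma_n^2,\beta_{\pi\inv(n)})$ pairs is globally optimal and self-consistent. Your envelope-theorem reduction to a separable objective $\sum_n g(\sigma_n^2,1/\beta_{\pi(n)})$ together with the supermodularity of $g$ (via $\partial\tau^\star/\partial\sigma^2>0$) closes that gap explicitly and justifies the anti-sorted pairing as a global, not merely stationary, choice. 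The water-filling quadratic for $\xi_n^2$ and the phase-ambiguity argument match the paper's.
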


\begin{proof}
	The proof can be found in Appendix~\ref{sec:proof}
\end{proof}

The singular values are not necessarily relevant for practical purposes,
i.e., linear precoding and filtering. The main observation is here, that we
can estimate the channels up to a scalar ambiguity by applying an SVD to
the uplink data $\Yul$ and correctly assigning the left singular vectors to
the different users. The assignment of the channel vectors to users is
possible if the slow fading coefficients in $\vB$ are known.

Note that the estimated channel vectors are already orthogonal since they
are left singular vectors of $\vY$. Consequently, the matched filter and
the zero-forcing filter based on this estimate are equivalent.

A major cause of estimation errors of the blind estimator are users with
similar slow fading coefficients. Even for perfectly orthogonal channel
vectors, which is fulfilled for $M\rightarrow \infty$, the corresponding
subspaces can not be separated with the SVD of the uplink data and the
estimated channel vectors are a linear combination of the actual channel
vectors of those users with similar slow fading coefficients. To accurately
separate channels to different users with blind estimation we need not only
a large number of antennas, but also a large number of data samples within
one coherence interval. The first condition is met in a massive MIMO
system, but the length of the coherence interval is an inherent property of
the wireless channel and thus a sufficient number of data samples cannot be
guaranteed.

\section{Semi-blind Estimation}
\label{sec:semi-blind}

If only a small number of uplink data samples is available, the blind
estimation approach discussed in the previous section can deliver poor
results, because the subspace estimate might be inaccurate. In this
section, we therefore consider the joint semi-blind estimation based on
both, uplink data and training signals.

We first discuss the MAP problem for this setup, which is non-convex in the
semi-blind case.  Then, we introduce a heuristic estimation method based on
subspace projection that can be used to initialize an unconstrained solver
for the MAP problem and exhibits low-complexity.

We want to calculate an estimate of the channel $\vect H_1$ given the observations $\Yul$ of the data
and $\Ytr$ of the pilot signals. However, difficult marginalization steps can be avoided
by jointly estimating all channels. With $\vect H = [\vect H_1, \ldots, \vect
H_L]$~[see~\eqref{eq:channel}], the semi-blind MAP estimate for all channels to base station $1$ is given by
\begin{align}
	\vect H^\text{SB} &= \argmax_{\H} f_{\H \vert \Yul,\Ytr}(\H \vert \Yul,\Ytr) \notag \\ 
	&= \argmax_{\vect H} 
	f_{\Yul \vert \vect H}( \Yul \vert \vect H )
	f_{\Ytr \vert \vect H}( \Ytr \vert \vect H) 
	f_{\vect H}(\vect H). \notag\\
	&= \argmax_{\vect H} l_\text{tr}(\vect H) + l_\text{pr}(\vect H) + l_\text{ul}(\vect H)
	\label{eq:MAP}
\end{align}
since the received data signal $\vY$ and the received pilots $\vPhi$ are
independent when conditioned on the total channel matrix $\vH$.  All of the
probability density functions in~\eqref{eq:MAP} are circularly symmetric
complex Gaussian and follow directly from the system model
in~\eqref{eq:chanmod}, \eqref{eq:sys_uplink}, and~\eqref{eq:sys_training}.
The corresponding log-likelihood function for the training signal is given by~[cf.~\eqref{eq:lh_tr}]
\begin{align}\label{eq:lh_tr_semiblind}
	l_\text{tr}(\H) = -\left\lVert \Ytr - \sqrt{\ptr}\H \vPsi\he \right\rVert\fro^2.
\end{align}
with $\vPsi = [\vPsi_1,\ldots,\vPsi_L]$. The log-likelihood functions for
the prior and the uplink signals have been derived in the previous
sections, see  \eqref{eq:lh_pr_blind}, and~\eqref{eq:lh_ul}.

Due to the non-convex nature of the objective function in~\eqref{eq:MAP},
finding the global optimizer is difficult in general. However, since the
objective is differentiable, we can use any gradient based method to find a
local optimum from an initial guess.

For the derivatives, we have
\begin{equation}
	\frac{\partial l_\text{ul}(\vH)}{\partial \vect H^*} = (\id - \vect G \vect H^{\He})\Yul \Yul\he \vect G - \Tul \vect G
\end{equation}
for the data part, with
\begin{equation}
	\vect G = \vect H \bigg(\frac{1}{\pul}\id + \vect H^{\He} \vect H \bigg)^{-1}.
\end{equation}
For the prior part, we get
\begin{equation}
	\frac{\partial l_\text{pr}(\vH)}{\partial \vect H^*} = \vH \vB\inv = -[\vect H_1 \vect B_1\inv, \ldots \vect H_L \vect B_L\inv]
\end{equation}
and for the training part,
\begin{equation}
	\frac{\partial l_\text{tr}(\vH)}{\partial \vect H^*} =  \sqrt{\ptr}(\Ytr - \sqrt{\ptr}\H\vPsi\he )\vPsi.
\end{equation}
A suitable method for finding a local optimum for the relatively large
scale semi-blind MAP problem is, e.g., the
L-BFGS~\cite{nocedal_updating_1980} algorithm, which is a limited-memory
quasi newton method.

An accurate initial guess is essential to obtain a good performance. To this
end, we propose a heuristic estimation which method we term pilot-aware subspace
projection (PASP). The basic idea of this heuristic is to take the SVD of the
uplink data $\Yul=\vU\vS\vV\he$ and for each user $(i,k)$ select a subset
of the left singular vectors in $\vU$ based on the slow-fading coefficients
in $\vB$ and the allocation of pilot sequences. The least squares estimate
of the user $\vh_{ik}^\text{LS}$ is then projected onto the subspace
spanned by the selected singular vectors, i.e, 
\begin{align}
	\vh_{ik}^\text{PASP} = \vU \vLambda \vU\he \vh_{ik}^\text{LS}
\end{align}
where $\vLambda$ is a diagonal matrix with zeros and ones on the diagonal,
selecting the desired singular vectors.

A detailed description of the method and the rationale behind it can be
found in Appendix~\ref{sec:subspace} and results for the convergence speed in
simulations with different initializations are presented and discussed in Section~\ref{sec:results}.

Note that similarly to the blind MAP approach also for semi-blind channel
estimation the performance increases with an increasing number of antennas
and uplink data signals and thus the semi-blind approach is especially
suitable for the considered large-scale massive MIMO systems. However, in contrast
to the blind method, the semi-blind estimation offers a strict improvement
over the training based least-squares estimation and thus does not fall off
for a smaller number of available uplink signals.

\subsection{Upper Bound}
\label{sec:upper}

By combining the training signals with the uplink data signals in the
estimation process, we basically extend the training phase by the data phase
for which we only have statistical information. Therefore, in the hypothetical best
case, a semi-blind estimation method has the same performance as an optimal
estimation with exact knowledge of all transmitted data symbols. That is,
if a genie provides the transmitted data symbols, the augmented pilot
sequences in cell $i$ are given by
\begin{equation}
	\vPsi_i^\text{ub} = \bmat \vPsi_i \\ \vX_i \emat.
\end{equation}
We calculate the upper bound assuming known data symbols based on the
MAP/MMSE estimate discussed in Section~\ref{sec:training-based}.

\section{Results}
\label{sec:results}

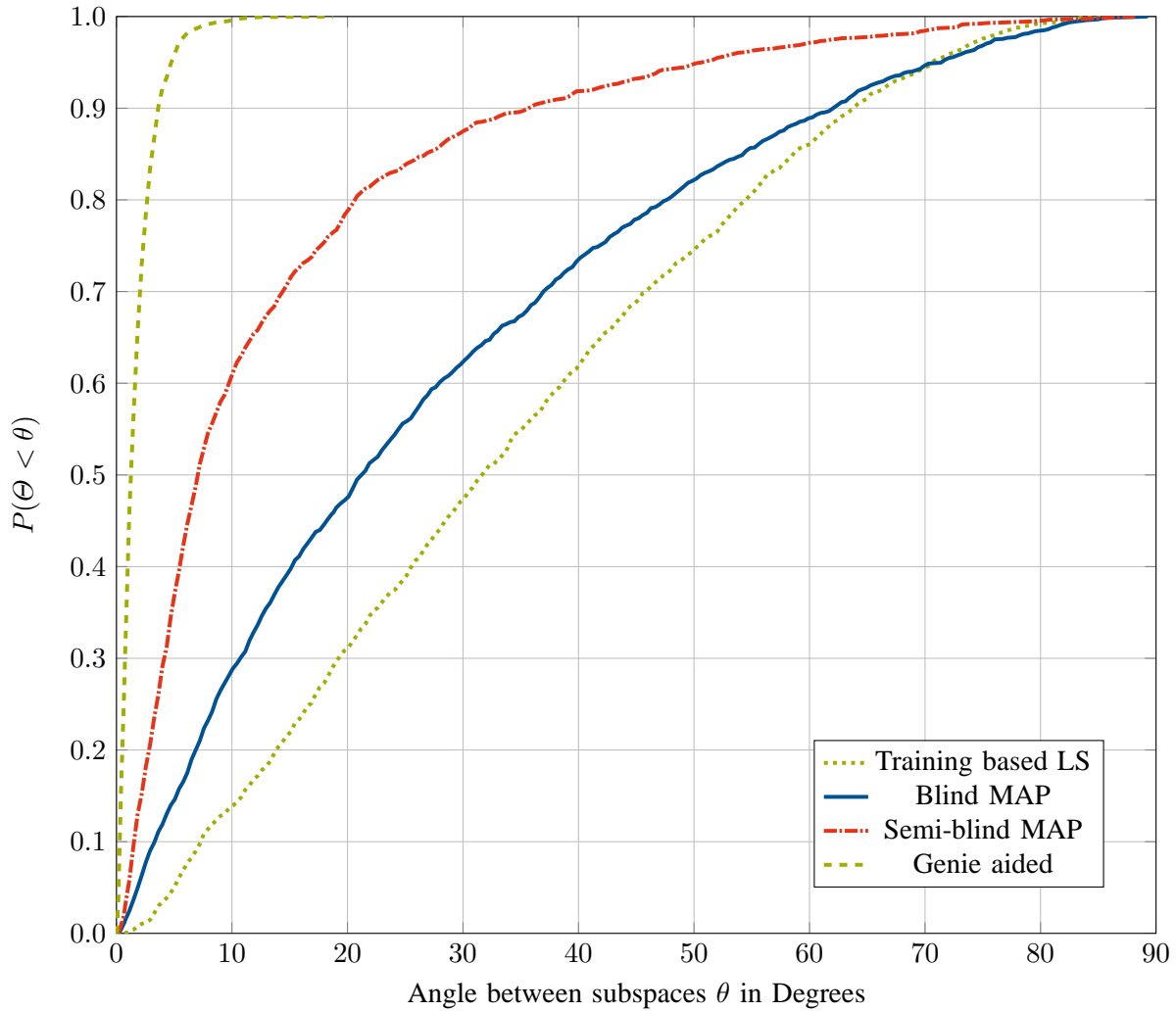
\begin{figure}[t]
	\begin{center}
\begin{tikzpicture}
\begin{axis}[
    height=0.85\columnwidth,
    width=0.95\columnwidth,
    grid=both,
    ticks=both,
    ymin=0,
    ymax=1,
    xmin=0,
    xmax=90,
    xlabel={Angle between subspaces $\theta$ in Degrees},
    ylabel={$P(\Theta < \theta)$},
    title={},
    yticklabel style={/pgf/number format/.cd, fixed, fixed zerofill, precision=1},
    legend entries={Training based LS,Blind MAP, Semi-blind MAP, Genie aided},
    legend style={at={(0.95,0.05)}, anchor=south east},
]

\addplot[ color=TUMGreen, dotted, line width=1.5pt ]
    table [x index=0,y index=1,col sep=comma] {err.csv};
\addplot[ color=TUMDarkerBlue, line width=1.5pt ]
    table [x index=0,y index=3,col sep=comma] {err.csv};
\addplot[ color=TUMBeamerRed, dash pattern=on 5pt off 1pt on 1pt off 1pt, line width=1.5pt ]
    table [x index=0,y index=5,col sep=comma] {err.csv};
\addplot[ color=TUMGreen, dashed,line width=1.5pt]
    table [x index=0,y index=6,col sep=comma] {err.csv};
\end{axis}
\end{tikzpicture} 
	\caption{CDF of the angle between estimated and actual subspace of the channel vectors for different estimation methods
		 in a network of $L=21$ cells with $K=4$ users per cell, $M=200$ base station antennas,
		and $\Tul=200$ uplink data samples.}
	\label{fig:errplot}
\end{center}
\end{figure}

\begin{figure}[h]
	\begin{center}
\begin{tikzpicture}
\begin{axis}[
    height=0.85\columnwidth,
    width=0.95\columnwidth,
    grid=both,
    ticks=both,
    ymin=0,
    ymax=1,
    xmin=0,
    xmax=11,
    xlabel={Matched filter rate $r_\text{MF}$},
    ylabel={$P(R_\text{MF}<r_\text{MF})$},
    title={},
    yticklabel style={/pgf/number format/.cd, fixed, fixed zerofill, precision=1},
    legend entries={Training based LS,Blind MAP, Semi-blind MAP, Genie aided},
    legend style={at={(0.95,0.05)}, anchor=south east},
]

\addplot[ color=TUMGreen, dotted, line width=1.5pt ]
    table [x index=0,y index=1,col sep=comma] {cdf_mf.csv};
\addplot[ color=TUMDarkerBlue, line width=1.5pt ]
    table [x index=0,y index=3,col sep=comma] {cdf_mf.csv};
\addplot[ color=TUMBeamerRed, dash pattern=on 5pt off 1pt on 1pt off 1pt, line width=1.5pt ]
    table [x index=0,y index=5,col sep=comma] {cdf_mf.csv};
\addplot[ color=TUMGreen, dashed,line width=1.5pt ]
    table [x index=0,y index=6,col sep=comma] {cdf_mf.csv};
\end{axis}
\end{tikzpicture} 
\begin{tikzpicture}
\begin{axis}[
    height=0.85\columnwidth,
    width=0.95\columnwidth,
    grid=both,
    ticks=both,
    ymin=0,
    ymax=1,
    xmin=0,
    xmax=11,
    xlabel={Zero-forcing rate $r_\text{ZF}$},
    ylabel={$P(R_\text{ZF}<r_\text{ZF})$},
    title={},
    yticklabel style={/pgf/number format/.cd, fixed, fixed zerofill, precision=1},
    legend entries={Training based LS,Blind MAP, Semi-blind MAP, Genie aided},
    legend style={at={(0.95,0.05)}, anchor=south east},
]

\addplot[ color=TUMGreen, dotted, line width=1.5pt ]
    table [x index=0,y index=1,col sep=comma] {cdf_zf.csv};
\addplot[ color=TUMDarkerBlue, line width=1.5pt ]
    table [x index=0,y index=3,col sep=comma] {cdf_zf.csv};
\addplot[ color=TUMBeamerRed, dash pattern=on 5pt off 1pt on 1pt off 1pt, line width=1.5pt ]
    table [x index=0,y index=5,col sep=comma] {cdf_zf.csv};
\addplot[ color=TUMGreen, dashed,line width=1.5pt ]
    table [x index=0,y index=6,col sep=comma] {cdf_zf.csv};
\end{axis}
\end{tikzpicture} 
	\caption{Empirical CDFs of average user rates for different estimation methods
		in a network of $L=21$ cells with $K=4$ users per cell,
		$M=200$ base station antennas and $\Tul=200$ uplink data
		samples. The upper plot depicts the matched filter results
		and the lower one the results for zero-forcing.}
	\label{fig:CDF}
\end{center}
\end{figure}

We compare the different approaches to channel estimation by simulation in
a cellular system with $L = 21$ cells in a wrap-around configuration, where
each base station is equiped with $M = 200$ antennas and $K=4$ users are
served in each cell. The pathloss of the users is calculated according to
the urban macro model in the ITU guidelines~\cite{itu-r_guidelines_2009} with log-normal shadow
fading with a standard deviation of $6$dB.

To evaluate the performance of the different algorithms we present results
for the angle between the estimated and the actual channel subspace, i.e.,
\begin{equation}\label{eq:subspace_angle}
	\theta(\vh,\hat\vh) = \cos\inv\left(\frac{\lvert \vh\he\hat\vh \rvert}{\lVert \vh \rVert \lVert \hat\vh \rVert}\right)
\end{equation}
and since we are evaluating a communication system, we also present results
for the achievable downlink rates, when using either a linear matched filter
or a linear zero-forcing precoder based on the channel estimates. For the
downlink rates we assume perfect knowledge of the equivalent channel at the
user terminals.

In Fig.~\ref{fig:errplot}, the experimental cummulative distribution
functions (CDF) of the subspace estimation error are depicted and in
Fig.~\ref{fig:CDF}, the corresponding downlink rates are shown. We give the
results for training-based least-squares estimation, blind estimation, and
semi-blind MAP estimation as well as the CDFs of the genie-aided upper
bound discussed in Section~\ref{sec:upper}.

Regarding the accuracy of the channel estimation measured by the angle
$\theta$ [see~\eqref{eq:subspace_angle}], the results in
Fig.~\ref{fig:errplot} show an improvement of the semi-blind method
compared to the LS channel estimation and the genie-aided channel
estimation gives a performance upper bound. However, the blind approach
outperforms the LS estimation in terms of accuracy of the estimated
subspace but delivers mixed results for the achievable rates with a slight
improvement for the worst 20\% of the users. The semi-blind MAP approach of
Section~\ref{sec:semi-blind} with the with pilot-aware subspace
projection~\eqref{eq:projection_pilot} as initialization always
yields a significant performance gain. In the considered scenario, the
average rate is increased by about 25\% when we apply semi-blind channel
estimation and especially the weak users benefit from the improved channel
estimation with gains at the 5th percentile of about 800\%. Also note that
zero-forcing precoding only benefits the users with above average channel
quality and for matched filter precoding, the blind estimation outperforms the
other methods for strong users due to the inherent orthogonality of the
channel estimates.

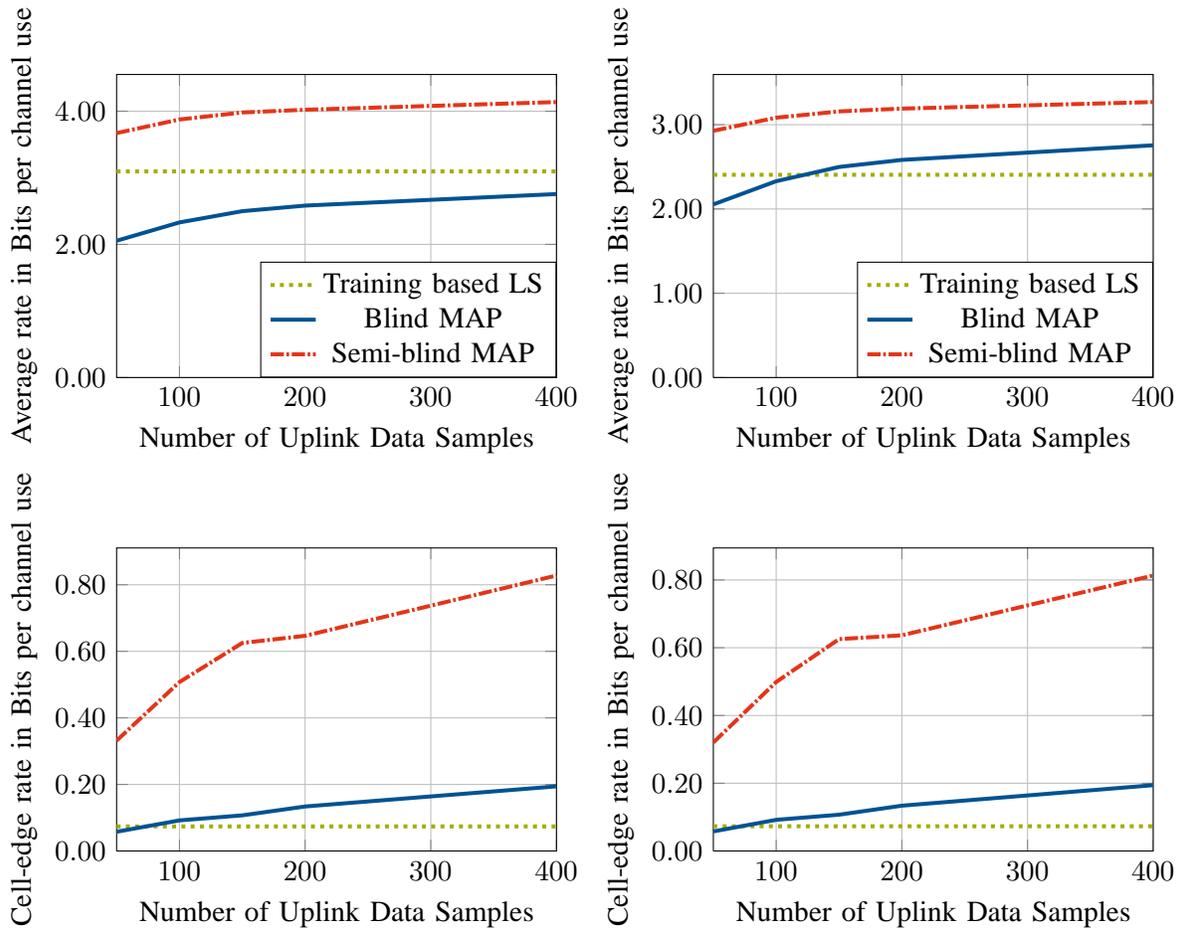
\begin{figure*}[h]
	\begin{center}
\begin{tikzpicture}
\begin{axis}[
    height=0.34\textwidth,
    width=0.45\textwidth,
    grid=both,
    ticks=both,
    xmin=50,
    xmax=400,
    ymin=0,
    xlabel={Number of Uplink Data Samples},
    ylabel={Average rate in Bits per channel use},
    title={},
    yticklabel style={/pgf/number format/.cd, fixed, fixed zerofill, precision=2},
    legend entries={Training based LS,Blind MAP, Semi-blind MAP},
    legend style={at={(1.00,0.0)}, anchor=south east},
]

\addplot[ color=TUMGreen, dotted, line width=1.5pt ]
    table [x index=0,y index=1,col sep=comma] {tul_zf.csv};
\addplot[ color=TUMDarkerBlue, line width=1.5pt ]
    table [x index=0,y index=3,col sep=comma] {tul_zf.csv};
\addplot[ color=TUMBeamerRed, dash pattern=on 5pt off 1pt on 1pt off 1pt, line width=1.5pt ]
    table [x index=0,y index=5,col sep=comma] {tul_zf.csv};

\end{axis}
\end{tikzpicture} 
\begin{tikzpicture}
\begin{axis}[
    height=0.34\textwidth,
    width=0.45\textwidth,
    grid=both,
    ticks=both,
    xmin=50,
    xmax=400,
    ymin=0,
    xlabel={Number of Uplink Data Samples},
    ylabel={Average rate in Bits per channel use},
    title={},
    yticklabel style={/pgf/number format/.cd, fixed, fixed zerofill, precision=2},
    legend entries={Training based LS,Blind MAP, Semi-blind MAP},
    legend style={at={(1.00,0.0)}, anchor=south east},
]

\addplot[ color=TUMGreen, dotted, line width=1.5pt ]
    table [x index=0,y index=1,col sep=comma] {tul_mf.csv};
\addplot[ color=TUMDarkerBlue, line width=1.5pt ]
    table [x index=0,y index=3,col sep=comma] {tul_mf.csv};
\addplot[ color=TUMBeamerRed, dash pattern=on 5pt off 1pt on 1pt off 1pt, line width=1.5pt ]
    table [x index=0,y index=5,col sep=comma] {tul_mf.csv};

\end{axis}
\end{tikzpicture} 

\begin{tikzpicture}
\begin{axis}[
    height=0.34\textwidth,
    width=0.45\textwidth,
    grid=both,
    ticks=both,
    xmin=50,
    xmax=400,
    ymin=0,
    xlabel={Number of Uplink Data Samples},
    ylabel={Cell-edge rate in Bits per channel use},
    title={},
    yticklabel style={/pgf/number format/.cd, fixed, fixed zerofill, precision=2},
    legend style={at={(1.00,0.0)}, anchor=south east},
]

\addplot[ color=TUMGreen, dotted, line width=1.5pt ]
    table [x index=0,y index=1,col sep=comma] {tul_zf_p5.csv};
\addplot[ color=TUMDarkerBlue, line width=1.5pt ]
    table [x index=0,y index=3,col sep=comma] {tul_zf_p5.csv};
\addplot[ color=TUMBeamerRed, dash pattern=on 5pt off 1pt on 1pt off 1pt, line width=1.5pt ]
    table [x index=0,y index=5,col sep=comma] {tul_zf_p5.csv};

\end{axis}
\end{tikzpicture} 
\begin{tikzpicture}
\begin{axis}[
    height=0.34\textwidth,
    width=0.45\textwidth,
    grid=both,
    ticks=both,
    xmin=50,
    xmax=400,
    ymin=0,
    xlabel={Number of Uplink Data Samples},
    ylabel={ Cell-edge rate in Bits per channel use},
    title={},
    yticklabel style={/pgf/number format/.cd, fixed, fixed zerofill, precision=2},
    legend style={at={(1.00,0.0)}, anchor=south east},
]

\addplot[ color=TUMGreen, dotted, line width=1.5pt ]
    table [x index=0,y index=1,col sep=comma] {tul_mf_p5.csv};
\addplot[ color=TUMDarkerBlue, line width=1.5pt ]
    table [x index=0,y index=3,col sep=comma] {tul_mf_p5.csv};
\addplot[ color=TUMBeamerRed, dash pattern=on 5pt off 1pt on 1pt off 1pt, line width=1.5pt ]
    table [x index=0,y index=5,col sep=comma] {tul_mf_p5.csv};

\end{axis}
\end{tikzpicture} 
\end{center}
	\caption{Achievable rate for different lengths of the uplink data
		transmission interval $\Tul$, on the left with zero-forcing
		and on the right with matched filter. The upper plots
		present the average rate and the lower ones the rate at the
		5th percentile, i.e., the rate of the cell-edge users. } 
	\label{fig:tul}
\end{figure*}

In Fig.~\ref{fig:tul}, we present the average rate performance of the different
algorithms versus the length of the received uplink data signals $\Tul$. For
the cell-edge users, the gain of the semi-blind MAP method is already
larger than 400\% for 50 uplink samples, growing to a tenfold increase in
performance for 400 uplink samples. The performance of blind estimation
increases only slowly with the number of uplink time slots.
In other words, in contrast to the blind MAP estimation, the proposed
semi-blind MAP estimation always outperforms the training based
least-squares estimate and offers significant performance gains even for a
small number of received uplink data signals. Note that for the blind and
the semi-blind estimation it is beneficial to have a large number of
antennas due to the increasing structure in the uplink data.

\begin{figure}[t]
	\begin{center}
\begin{tikzpicture}
\begin{axis}[
    height=0.85\columnwidth,
    width=0.95\columnwidth,
    grid=both,
    ticks=both,
    ymin=0,
    ymax=5,
    xmin=1,
    xmax=1280,
    xlabel={Number of L-BFGS Iterations},
    ylabel={Average user rate in Bit/s/Hz},
    title={},
    yticklabel style={/pgf/number format/.cd, fixed, fixed zerofill, precision=2},
    legend entries={ Pilot-aware Subspace Projection, Blind MAP, training based LS,random initialization},
    legend style={at={(1.00,0.0)}, anchor=south east},
]

\addplot[ color=TUMDarkerBlue, dotted,line width=1.5pt ]
    table [x index=0,y index=4,col sep=comma] {converge.csv};
\addplot[ color=TUMGreen, dash pattern=on 5pt off 2pt,line width=1.5pt ]
    table [x index=0,y index=3,col sep=comma] {converge.csv};
\addplot[ color=TUMOrange, dash pattern=on 5pt off 1pt on 1pt off 1pt, line width=1.5pt ]
    table [x index=0,y index=2,col sep=comma] {converge.csv};
\addplot[ color=black, line width=2pt ]
    table [x index=0,y index=1,col sep=comma] {converge.csv};

\end{axis}
\end{tikzpicture} 
\end{center}
	\caption{Performance for different initializations of the
		semi-blind MAP estimator with respect to the number of
		iterations used to find an optimizer for the MAP problem}
	\label{fig:init}
\end{figure}
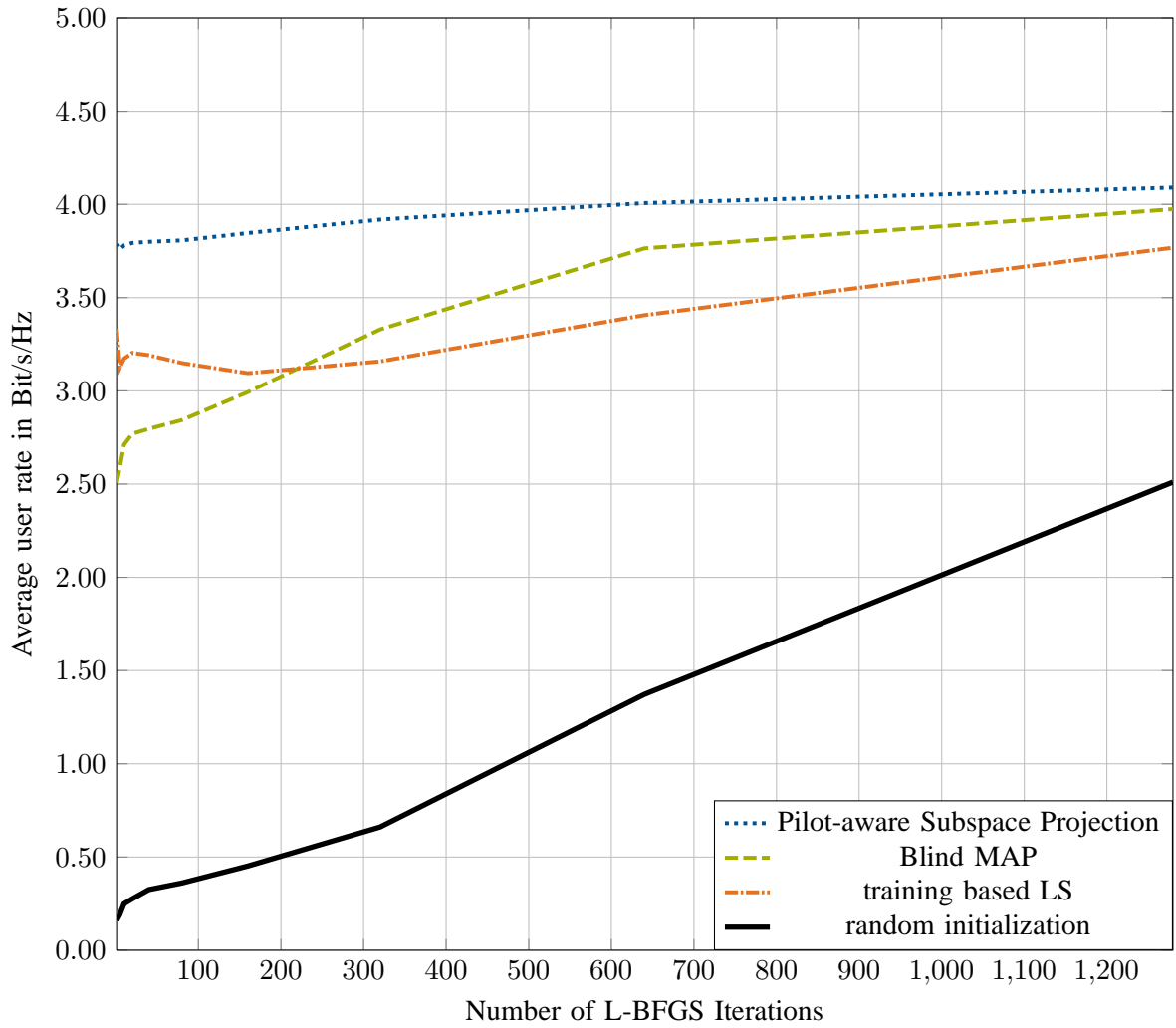

The unconstrained optimization for the MAP approach is performed with the
limited-memory BFGS (L-BFGS) algorithm~\cite{nocedal_updating_1980}. In
Fig.~\ref{fig:init}, the average user-rate performance of the MAP estimator is given for
different initializations versus the number of iterations of the
L-BFGS algorithm. Note that with all initializations, the iteration seems to converge
to the same achievable rate. However, the convergence is rather slow which is
due to the large dimensionality of the problem. 

If we use the least squares estimate as initial guess, the initial
improvement is small due to the fact that the estimates for the different
cells are linearly dependent or even identical when the same pilot
sequences are reused in each cell. Using the blind estimate as a starting
point yields improved convergence speed. The proposed semi-blind PASP
approach, however, significantly outperforms both the least-squares and the
blind estimates. Random initialization of the iteration does not result in
satisfactory performance for a reasonable amount of iterations and should
be avoided.

\section{Conclusion}

We discussed the MAP channel estimation based on different observations of
a base station in a cellular network. For training-based estimation the
results are well-known and involve solving a linear system of equations. We
were able to derive an quasi-analytical solution to the blind MAP problem based
on the SVD of the uplink data.
However, system level simulations indicate that blind estimation
delivers unsatisfactory results in a typical massive MIMO system. With the proposed
semi-blind MAP estimator on the other hand, we are able to improve upon the
estimation accuracy of a solely training based estimation. Our results
indicate that the proposed semi-blind estimation approaches consistently
outperform state-of-the-art training based channel estimation methods significantly.

\appendix

\subsection{Proof for the Blind MAP Estimator}
\label{sec:proof}

For the following derivation, we assume that $M > KL$, i.e., the number of
antennas at the considered base station is larger than the total number of
users. Additionally,
we need the probability density function of the uplink
data given the channel coefficients. Since we do not distinguish between
desired and interfering channels, we can write~\eqref{eq:sys_uplink} more
concisely as
\begin{equation}
	\vY = \sqrt{\pul} \vH \vX\he + \vN_\text{ul}
\end{equation}
where $\vX = [\vX_1,\ldots,\vX_L]$. Since all entries of both $\vX$ and
$\vN_\text{ul}$ are i.i.d. Gaussian with zero mean and unit variance, the
columns $\vy_t, t=1,\ldots,\Tul$, of the uplink data $\vY$, given the channel matrix $\vH$,
are also i.i.d. Gaussian with zero mean. The covariance matrix of the $t$-th column calculates to
\begin{equation}
	\expec[\vy_t\vy_t\he] = \pul \expec[\vH \vx_t \vx_t\he \vH\he] + \id = \pul \vH\vH\he + \id
\end{equation}
where $\vx_t\he$ is the $t$-th row of $\vX$.
The joint density for the uplink data $\vY$ given the channel realizations
$\vH$ thus reads as
\begin{align}
	f_{\Yul \vert \vect H}(\Yul \vert \vect H\vect) 
	&\propto  \frac{\exp\left[ -\tr\left[ \Yul\he \left(\pul\H \H\he + \id\right)^{-1} \Yul\right]\right]}
	{\det^{\Tul} \left(\pul \vect H^{\He} \vect H + \id\right)} \notag \\
&\propto  \frac{\exp\left[\tr\left[\Yul \Yul\he \vect H \left(\vect H^{\He} \vect H + 
\frac{1}{\pul}\id\right)^{-1} \vect H^{\He}\right]\right]}
	{\det^{\Tul} \left(\pul \vect H^{\He} \vect H + \id\right)}.
\end{align}
where we used the matrix inversion lemma to obtain the second line.

Due to the strict monotonicity of the logarithm, the MAP
formulation~\eqref{eq:MAPblindorig} can be rewritten as
\begin{equation}
	\H^\text{BL} = \argmax_{\vect H} l_\text{ul}(\vect H) + l_\text{pr}(\vect H)
	\label{eq:MAPblind}
\end{equation}
where 
\begin{equation}\label{eq:lh_ul}
l_\text{ul}(\vect H) = 
\left[\tr\left[\Yul \Yul\he \vect H \left(\vect H^{\He} \vect H + 
\frac{1}{\pul}\id\right)^{-1} \vect H^{\He}\right]\right] -\\
	\Tul \log\det \left(\pul \vect H^{\He} \vect H + \id\right)
\end{equation}
and [cf. \eqref{eq:lh_pr}]
\begin{equation}\label{eq:lh_pr_blind}
	l_\text{pr}(\vH) = -\tr[\H\B\inv\H\he]
\end{equation}
due to the channel model in~\eqref{eq:channel}.

In the following, we analyze the likelihood functions more closely. Let 
\begin{equation}\label{eq:svd}
	\H = \vW \vXi \vT\he \in \mathbb C^{M\times KL}
\end{equation}
denote the reduced singular value decomposition of $\H$ where
$\vXi\in\mathbb R^{Kl\times KL}$ is diagonal, $\vT \in \mathbb C^{KL \times
KL}$ is unitary, and $\vW \in \mathbb C^{M\times KL}$ is
subunitary due to the assumption that $M>KL$. Substituting the SVD of $\H$ into $l_\text{ul}$ yields
\begin{equation}
	l_\text{ul}(\vW, \vXi, \vT) = \tr\left[\Yul \Yul\he \vW \vXi \left(\vXi^2 + \frac{1}{\pul}\id\right)^{-1} \vXi\vW\he\right]\\
	- \log \det \left(\pul \vXi\vXi + \id\right).
\end{equation}
We observe that $l_\text{ul}$ does not depend on the right singular vectors
$\vT$ of $\H$, whereas
\begin{align}\label{eq:gpr}
	l_\text{pr}(\vW,\vXi,\vT) &= -\tr[\vW\vXi\vT\he \vB\inv \vT \vXi \vW\he] \notag \\
		&= -\tr[\vXi\vXi\vT\he\vB\inv\vT]
\end{align}
is independent of $\vW$.

For given singular values $\vXi$ and with
\begin{align}
\vC &= \Yul\Yul\he \\ 
\vD &= (\vXi^2 + \frac{1}{\pul}\id)\inv\vXi^2 \label{eq:diagd}
\end{align}
we can simplify $l_\text{ul}$ as
\begin{equation}
	l_\text{ul}(\vW) = \tr\left[\vW\he\vC \vW \vD \right]
\end{equation}
where the second term, which is independent of $\vW$, has been dropped.

As $l_\text{pr}(\vH)$ is independent of $\vW$, the optimal $\vW$ is the solution to
\begin{equation}\label{eq:leftsvopt}
	\vW^\text{opt} = \argmax_{\vW} \tr\left[\vW\he\vC \vW \vD \right] \st \vW\he \vW = \id.
\end{equation}
The corresponding Lagrangian is given by
\begin{equation}
	L(\vW,\vLambda) = \tr(\vW\he\vC \vW \vD) + \tr(\vLambda(\vW\he \vW - \id))
\end{equation}
where $\vLambda$ is the Hermitian Lagrangian multiplier. Derivation with respect to $\vW^*$ leads to
\begin{equation}
	\frac{\partial L(\vW,\vLambda)}{\partial \vW^*} = \vC \vW \vD + \vW \vLambda = \zeros
\end{equation}
from which follows that
\begin{equation}
	\vW\he \vC\vW\vD + \vLambda = \zeros
\end{equation}
since $\vW\he\vW = \id$. Consequently,
\begin{equation}
	\vW\he \vC\vW\vD = \vD\vW\he \vC\vW 
\end{equation}
since $\vLambda$ has to be Hermitian as the constraint
in~\eqref{eq:leftsvopt} is Hermitian. It can be inferred, that the optimal
left singular vectors $\vW$ of $\H$ diagonalize the sample covariance
$\vC$, i.e., with the eigenvalue decomposition $\vC = \vU \vSigm^2 \vU\he$
we have
\begin{equation}\label{eq:leftsv}
	\vW = \vU \vPi^\prime
\end{equation}
where $\vU$ is the matrix of left singular vectors of $\vY$ and $\vPi^\prime \in
\{0,1\}^{M\times KL}$ is the left block of a $M\times M$ permutation
matrix. Note that there is an ambiguity in every eigenvector of $\vC$,
i.e., the corresponding column of $\vU$, with respect to a scalar
multiplication with absolute value one, which we can move to the right
singular vectors $\vT$ of $\H$ so we do not have to consider it here.

Substituting \eqref{eq:leftsv} into~\eqref{eq:leftsvopt} yields
\begin{equation}
	{\vPi}^\text{opt} = \argmax_{\vPi^\prime} \tr\left[(\vPi^\prime)\tp \vect \Sigma^2 \vPi^\prime \vD \right]
\end{equation}
where $\vPi^\prime \in \{0,1\}^{M\times KL}$.
Since the diagonal entries in $\vD$ are ordered decreasingly, as the
entries of $\vXi$ are in descending order~[cf~\eqref{eq:diagd}], the
optimal choice for the selection matrix $\vPi^\prime$ is 
\begin{equation}
	\vPi^\text{opt} = \bmat \id \\ \zeros \emat
\end{equation}
because we want to match the largest eigenvalues in $\vect \Sigma^2$
with the largest values in $\vD$ to maximize the trace. Thus, the optimal choice for the left
singular vectors $\vW$ of $\H$, when conducting blind estimation, are simply the
principal left singular vectors $\vU_{1:KL}$ of $\Yul$.

A similar analysis can applied to $l_\text{pr}(\vH)$ [see~\eqref{eq:gpr}],
which is independent of $\vW$, to get the optimal right singular vectors
$\vT$. For given $\vXi$, the optimal right singular vectors are given by
\begin{equation}\label{eq:right_sv}
	\vT = \vPi\vPhi
\end{equation}
where $\vPi \in \{0,1\}^{KL \times KL}$ is a permutation matrix, such that the entries along the
diagonal of $\vPi\tp \vB\inv \vPi$ are sorted ascendingly and $\vPhi$ contains
the unknown phase shifts mentioned above. Based on the permutation matrix
$\vPi$, we can define a function $\pi(i,k)$ such that the $\pi(i,k)$th
left singular vector $\vu_{\pi(i,k)}$ of $\Yul$ spans the subspace of the blind
estimate $\vh^\text{BL}_{ik}$ of user $k$ in cell $i$. The inverse mapping $(i,k) =
\pi\inv(n)$ delivers the cell and user indices corresponding to the $n$th singular vector
of the uplink data.

With the results for the left and right singular vectors, we can now
optimize the singular values $\vXi$. Substituting the previous results, the
optimization problem reads as
\begin{align}
	\vXi^\text{opt} &= \argmax_{\vXi \succeq \zeros } 
	\begin{array}[t]{l} 
		\tr\left[\vect \Sigma_p^2 \vXi^2 \left(\vXi^2 + \frac{1}{\pul}\id\right)^{-1} \right] \\[2mm]
		- \Tul \log \det \left(\pul \vXi^2 + \id\right)\\[2mm]
		- \tr[ \vXi^2 \vPi\tp \vB\inv \vPi] 
	\end{array} \notag \\
	&= \argmax_{\vXi \succeq \zeros} l(\vXi^2)
\end{align}
where $\vSigm_\text{P} = \diag(\sigma_1, \ldots, \sigma_{KL})$ denotes the
matrix with the principal $KL$ singular values of $\Yul$ along the diagonal.

Setting the derivatives of the objective function
\begin{equation}
	\frac{\partial l(\vXi^2)}{\partial \xi_n^2} = \frac{\sigma_n^2/\pul}{(\xi_n^2 + 1/\pul)^2} - \frac{\Tul}{\xi_n^2 + 1/\pul} - \frac{1}{\beta_{\pi\inv(n)}}
\end{equation}
to zero yields
\begin{align}\textstyle{
	\frac{\beta_{\pi\inv(n)}\sigma_n^2}{\pul} - \beta_{\pi\inv(n)}\Tul\left(\xi_n^2 + \frac{1}{\pul}\right) - \left(\xi_n^2 + \frac{1}{\pul}\right)^2} &= 0\\
\textstyle{\xi_n^4 + \left(\frac{2}{\pul} + \beta_{\pi\inv(n)}\Tul\right) \xi_n^2 
	+ \frac{\beta_{\pi\inv(n)}(\Tul-\sigma_n^2)}{\pul} + \frac{1}{\pul^2} }  &= 0.
\end{align}
Since $\xi_n^2 \geq 0$, the optimal $\xi_n^2$ are thus given by
\begin{equation}
	\xi_n^2 = \left[-\frac{1}{\pul} - \frac{\beta_{\pi\inv(n)}\Tul}{2} + \sqrt{ \frac{\beta_{\pi\inv(n)}^2\Tul^2}{4} + \frac{\beta_{\pi\inv(n)}\sigma_n^2}{\pul} } \right]_+.
\end{equation}

\subsection{Pilot-Aware Subspace Projection}
\label{sec:subspace}

From Section~\ref{sec:blind}, we know that the blind estimation of the
channel is unique up to a scalar ambiguity. The direction information
resulting from the blind estimation of the channel to user $k$ in cell $i$ is
given by the $\pi(i,k)$th left singular vector $\vu_{\pi(i,k)}$ of the
uplink data $\vY$ [see the discussion below~\eqref{eq:right_sv}].

By projecting the training based least-squares estimate onto the
one-dimensional space spanned by the blind estimate, we get a unique
estimate of the channel vector
\begin{equation}\label{eq:projection_single}
	\vh_{ik}^\text{PR} = \vu_{\pi(i,k)} (\vu_{\pi(i,k)}\he \vh_{ik}^\text{LS}).
\end{equation}
If the direction of the blind estimate is accurate, the projection cancels
out most of the interference in the least-squares estimate. However, the
projection in~\eqref{eq:projection_single} does not improve the estimation
of the spatial direction of the channel vector w.r.t. the blind estimate.
Indeed, it only removes the scalar ambiguity of the blind estimate, which
is, however, of small importance or even irrelevant in most cases.

Alternatively, we can collect $R > K$ left singular vectors of $\Yul$
corresponding to the $R$ largest singular values in the matrix $\vU_{1:R}$
and project the least-squares estimate onto the range of $\vU_{1:R}$, i.e.,
\begin{equation}\label{eq:projection_full}
	\vh_{ik}^\text{SUB} = \vU_{1:R} \vU_{1:R}\he \vh_{ik}^\text{LS}.
\end{equation}
Note that the number of basis vectors $R$ must be chosen large enough such that all significant parts of the
desired channel vectors lie in the resulting subspace. With this heuristic strategy, 
noise and weak interferers can be suppressed. However, the channel estimate is still
subject to pilot-contamination originating from strong interfering users.

We can generalize on the semi-blind approaches
in~\eqref{eq:projection_single} and~\eqref{eq:projection_full} by incorporating a
diagonal weighting matrix into the projection resulting in
\begin{equation}\label{eq:projection_general}
	\vh_{ik}^\text{PASP} = \vU_{1:R} \diag(\vla_{ik}) \vU_{1:R}\he \vh_{ik}^\text{LS}
\end{equation}
where the weights $\vla_{ik}$ are chosen heuristically and depend on both,
the slow fading coefficients and the pilot sequences of all relevant users.

If we choose $\vla_{ik} = \ones$, we get the full projection on the $R$-dimensional subspace
in~\eqref{eq:projection_full} which is similar to the subspace projection
proposed in~\cite{mueller_blind_2013}, but extended to the practically relevant
case of strong interferers. For $\vla_{ik} = \ve_{\pi(i,k)}$, we get the
projection in~\eqref{eq:projection_single}, which performs equivalent to
the blind estimate.

Our goal is to find a heuristic, which combines the subspace information
from the blind estimation with the LS estimate obtained based on the training data in a
way such that the interference in the LS estimate is reduced. 
Suppose all base
stations use a fixed set of orthonormal training sequences $\set P$ with
$\lvert \set P \rvert = \Ttr$, i.e., we have $\vpsi_{ik} \in \set P$ for
each training sequence in all cells. Therefore, the channel estimate of a desired user suffers from
contamination originating from users in neighboring cells that employ the same training
sequence. 

By the choice of $\vla_{ik}$, the LS estimate is projected onto the space spanned by several of the
singular vectors, that is, onto $\spn(\{\vu_n\}_{n=a}^b)$ with $a\leq \pi(i,k) \leq b$, to
reduce the impact of inaccuracies in the singular vectors due to users with
similar channel gains, while still reducing significant parts of the
interference.

Let us denote with $l<\pi(i,k)$ the index of the closest (in
terms of channel quality) interfering user with better channel quality and
$u>\pi(i,k)$ the index of the closest interfering user with lower channel
quality. 
To achieve the goal of interference suppression, none of the singular vectors $\vu_n, n =
a,a+1,\ldots,b$ should correspond to an interfering user, e.g., 
 with the following heuristic choice for the window coefficients
\begin{equation}\label{eq:projection_pilot}
	\lambda_{ik,n} = \begin{cases}
		1 \quad &\text{for} \;\; \sqrt{\beta_{\pi\inv(l)}\beta_{ik}} \geq \beta_{\pi\inv(n)} \geq \sqrt{ \beta_{\pi\inv(u)} \beta_{ik}} \\
		0 \quad &\text{else}.
	\end{cases}
\end{equation}
In other words, we project onto the basis of the left singular vectors of
$\vY$ that are ``closer'' to that corresponding to the desired user than to
any interfering user, where we use the geometric mean to define the point
of equivalent distance.

With this approach, we project the training based estimate onto a higher
dimensional subspace which leads to an improved estimate of the channel
direction, while still suppressing major parts of the pilot-contamination.

Note that it is not possible to estimate the CSI accurately by this
heuristic method, if an interfering user, which employs the same pilot
sequence, has a slow fading coefficient that is very similar to that of the
desired user. This indicates, that the estimation performance can be
improved by assigning pilots accordingly.

\bibliographystyle{IEEEtran}
\bibliography{literature}

\end{document}